\newtheorem{thmm}{Theorem}
\newtheorem{propp}{Proposition}
\newtheorem{remm}{Remark}
\newtheorem{Lemm}{Lemma}
\newcommand{\hu}{|h|_{\mathscr{U}}}
\newcommand{\bdot}{\dot{\mathscr{B}}(h)}
\newcommand{\bh}{{\mathscr{B}}(h)}
\newcommand*\diff{\mathop{}\!\mathrm{d}} 
\title{An Input-to-State Safety Approach Towards Safe Control of a Class of Parabolic PDEs Under Disturbances}
\author{Tanushree~Roy,
Ashley~Knichel, 
Satadru~Dey, 
\thanks{Tanushree Roy is with the Department
of Mechanical Engineering, Texas Tech University, Lubbock, Texas, 79409,  e-mail: (tanushree.roy@ttu.edu). USA. Ashley Knichel and Satadru Dey are with the Department
of Mechanical Engineering, The Pennsylvania State University, University Park, Pennsylvania 16802, USA, email: (ark5514, skd5685@psu.edu).} 
}      
\begin{document}
\maketitle






\begin{abstract}
Distributed Parameter Systems (DPSs), modelled by partial differential equations (PDEs), are increasingly vulnerable to disturbances arising from various sources. Although detection of disturbances in PDE systems have received considerable attention in existing literature, safety control of PDEs under disturbances remains significantly under-explored. In this context, we explore a practical input-to-state safety (pISSf) based control design approach for a class of DPSs modelled by linear Parabolic PDEs. Specifically, we develop a control design framework for this class of system with both safety and stability guarantees based on control Lyapunov functional and control barrier functional. To illustrate our methodology, we apply our strategy to design a thermal control system for battery modules under disturbance. Several simulation studies are done to show the efficacy of our method.
\end{abstract}

\section{Introduction}


Stability and safety verification in Ordinary Differential Equations (ODEs) has been explored widely and a detailed list of works can be found in survey papers \cite{sontag2008input} and \cite{ODEsurvey}. Two different notions have emerged in existing literature that characterize stability and safety, respectively: 
\begin{itemize}
    \item \textit{Input-to-state stability (ISSt) \cite{sontag1996new}:} Here is the objective is to study the stability of systems in the presence of external \textit{input}, and examine the boundedness of the system state trajectories only in a sense proportional to the size of the \textit{input}.
    \item \textit{Input-to-state safety (ISSf) \cite{jaya_iss2016new,kolathaya2018input,krstic2021inverse}:} Here the objective is to ensure that the system state trajectories stay away from a predefined unsafe region, or in other words, stay \textit{close} to safe region. Specifically, trajectories moving from safe zone towards unsafe region will violate safety boundary only in a sense proportional to the size of \textit{input}. On the other hand, trajectories starting in the unsafe region will be brought close of the safety boundary where \textit{closeness} is proportional to the size of the \textit{input}.
\end{itemize}
In the context of this work, we consider disturbances as external \textit{inputs} in the notions of ISSt and ISSf. ISSt in sense of Sontag has been investigated using Lyapunov functionals \cite{sontag1996new}. The notion of pratical ISSt has been explored in \cite{sontag1996new} that augments the original ISSt with certain practical considerations. On the other hand, ISSf was introduced in \cite{WIELAND2007462}. Since then, two prominent methods are generally used for ISSf analysis/design/verification: reachable sets approximation  \cite{limon2009input,angeli2015characterizations,coutinho2011nonlinear} and barrier functionals \cite{kolathaya2018input,xu2015robustness,wills2004barrier,taylor2020learning}. Among the various works in this domain, \cite{jaya_iss2016new,romdlony2016stabilization} introduced the notion of ISSf with respect to systems while \cite{kolathaya2018input} introduced the ISSf notion with respect to sets. Based on ISSf, inverse optimal safety filters have been explored in \cite{krstic2021inverse} which studies inverse optimality of safety filters under stochastic and deterministic disturbances.

Control of PDE systems has been widely explored over the years  \cite{krstic2008boundary,smyshlyaev2010adaptive,CHRISTOFIDES201321,karafyllis2018feedback}. Similar to ODEs, notions of ISSt for PDE systems have garnered a lot of attention recently (see survey paper \cite{mironchenko2020input}). For example, PDE ISSt have been explored for reaction-diffusion systems \cite{dashkovskiy2010uniform}, hyperbolic systems \cite{tanwani2017disturbance}, \cite{roy2020secure}, parabolic systems \cite{roy2021security}, parabolic PDE systems with boundary disturbances \cite{karafyllis2016iss}, \cite{karafyllis2019}, systems with distributed time-delays \cite{pepe2008liapunov}, and diffusion equation with time-varying distributed coefficients \cite{argomedo2012strict}. Notions of practical ISSt for PDEs have been explored in \cite{mironchenko2018criteria}. In contrast to ISSt, ISSf has remained mostly unexplored in the context of PDEs. In \cite{ahmadi2017safety}, safety verification using barrier functionals for homogeneous distributed parameter systems has been considered. In this work, numerical strategies based on semi-definite programming has been used for the construction of barrier functionals. {However, control performance under disturbances has not been considered in this work}. Given the importance of maintaining system safety under disturbances, it is critical to consider control system design for PDE systems under these disturbances. In \cite{koga2021safe}, safe control of Stefan system under disturbances is considered. In the framework proposed in \cite{koga2021safe}, an operator is allowed to manipulate the control input as long as safety constraints are satisfied; however, the safety control overrides the operator control signal realizing a feedback control ultimately guaranteeing safety. The feedback law for safety control is designed utilizing backstepping, quadratic programming, and a control barrier function. In our current work, we attempt an alternate approach to achieve safety control of a class of linear parabolic PDEs under disturbances. Specifically, we design a control law that employs feedback from the boundaries and an in-domain point, by utilizing a practical ISSf (pISSf) barrier functional characterization (inspired by the notion presented in \cite{jaya_iss2016new}). Subsequently, utilizing ISSt Lyapunov functional characterization, we prove that such designed safety control is also an input-to-state stabilizing control under certain additional conditions. In this way, we ultimately propose a feedback control law that satisfies the conditions of both ISSt and pISSf.

In light of the aforementioned discussion, the main contributions of this paper is the following: \textit{Building upon the existing literature, we extend PDE safety research by designing a feedback based control that satisfies both pISSf and ISSt under disturbances, utilizing pISSf barrier functional characterization and ISSt Lyapunov characterization.} As a case study, we consider a one-dimensional thermal PDE model for a battery module with a boundary coolant control. Next, we construct a control barrier functional and control Lyapunov functional for obtaining analytical guarantees for safety and stability for the battery system. The analytical guarantees allows us design the controller gains for actuating the boundary coolant. The rest of the paper is organized as follows. Section 2 sets up the problem by discussing the battery module thermal model and formulating control objectives. Sections 3 and 4 detail the pISSf-ISSt framework. Section 4 presents case studies to illustrate the proposed framework. Finally, Section 5 concludes the paper.




\section{Problem setup}



In this work, we consider a linear Parabolic PDE of the following form:
\begin{align}
    &  T_{t}(x,t)  =  \alpha T_{xx}(x,t) + Q(x,t) + \Delta(x,t), \label{eq1}
\end{align}
where $\alpha>0$, {$T:[0,L]\times[0, t_{max}]\to \mathbb{R}$ denotes the solution of the PDE over time $t\in [0,t_{max}], 0<t_{max}<\infty$ and space $x\in[0,L]$. Here $T(x,t)$ belongs to the space of twice differentiable functions over domain $D=[0,L]\times[0, t_{max}]$. Moreover, the known distributed input, denoted by $Q(x,t)$ and a disturbance, denoted by $ \Delta(x,t)$, belong to the space of continuous functions over domain D.} In this work, we consider Robin-type boundary condition (which can be regarded as a weighted sum of Neumann and Dirichlet conditions) given by
\begin{align}
    & T_{x}(0,t)=k\big(T(0,t)-T_{c1}(t)\big), \label{eq3}\\
    & T_{x}(L,t)= k\big(T_{c2}(t)-T(L,t)\big),\label{eq4}
\end{align}
where $k\in\mathbb{R}^+$ is a known parameter and  $T_{c1}(t)$ and $T_{c2}(t)$ are the boundary control inputs. {Finally, the initial condition for the system is given by $T(x,0) = T_0\in \mathbb R, \forall x \in [0,L]$.} 


We consider a battery module as a case study in this work. In the context of battery module, the one-dimensional PDE given by \eqref{eq1} captures the spatio-temporal thermal dynamics (please refer to \cite{chung2019thermal} and \cite{9483248} for details of the modeling). Here $T(x,t) $ is the distributed temperature of the module over the domain $D=[0,L]\times [0,t_{max}]$. {Next, let us denote the space of continuous functions over domain $X$ to be $C(X)$ and the set of positive real numbers to be $\mathbb{R}^+$. Then,  $\alpha=({k_{b}})/({\rho_{b} c_{p,b}})\in \mathbb{R}^+$ is the thermal conductivity, $Q = ({I^2(x,t) R})/({\rho_{b} c_{p,b}V_{b}})\in C(D)$ is the internal distributed heat generation in the battery module due to nominal current flow with $R$ being the resistance of the battery and $I(x,t)\in C(D)$ being the battery module current, and {$V_{b}\in \mathbb{R}^+$ represents volume of the module}. The variable $\Delta(x,t)\in C(D)$ represent an unknown disturbance arising either from a non-malicious physical phenomenon or by a malicious adversarial cyber-attack. Additionally, $\rho_{b}, c_{p,b},k_b\in \mathbb{R}^+$ represent the equivalent density, specific heat capacity and thermal conductivity of the battery, respectively. In terms of output measurements, temperature sensors are placed along the length of the battery module. In this work, we assume that we have sensors are placed in each boundary, and an additional sensor is placed in the middle of the module. That is, $T(0,t)$, $T(L,t)$, and $T(m,t)$ are measured where $m$ is a middle point, i.e. $0<m<L$. We note here that $T(0,t)$, $T(L,t),T(m,t)\in C([0, t_{max}])$.}

\subsection{Control Problem Formulation}

Effective thermal management requires maintaining the temperature along the battery module close to a desired safe temperature, that is $\left|T(x,t)-T_{d}\right|$ should be as small as possible, where $T_d$ is a predetermined desired safe temperature. In this work, we further characterize this objective in terms of pISSf and ISSt properties. First, we write the error system PDE as 
\begin{align}\label{h_system}
    &h_t(x,t) = \alpha h_{xx}(x,t) + D(x,t),\\\label{bc1}
    &h_x(0,t) = k[h(0,t)-u_1(t)],\\\label{bc2}
    &h_x(L,t) = k [u_2(t)-h(L,t)],
\end{align}
where $h(x,t)=T(x,t)-T_{d}$ is the error with respect to the desired safe temperature, and $u_1$ and $u_2$ are boundary cooling inputs. Here $D=\bar{D}+Q$ where $\bar{D}$ represents the disturbance (effect of $\Delta(x,t)$ in \eqref{eq1}) and $Q$ is the previously defined nominal input. We have combined both of them in a single notation to simplify the further analysis.  {In this work, we choose the following structures of the boundary cooling control inputs:
\begin{align}\label{input1}
    &u_1(t) = \mu h(0,t) ,\quad u_2(t) = \beta h(L,t),
\end{align}
where $\mu$ and $\beta$ are the control gains.} This modifies the boundary condition to 
\begin{align}\label{bc11}
    & h_x(0) = k (1-\mu)h(0), \quad h_x(L) = k(\beta-1)h(L).
\end{align}
Under this setting, our goal is to design these gains $\mu$ and $\beta$ such that the system is stabilized and safety is maintained. We define the following properties which provide us more precise notions of stability and safety.

\textbf{Safety Property} is given by \textit{pISSf} with respect to an unsafe set ${\mathscr{U}}$, which is defined by the following criterion on the system state $h(x,t)$ \cite{jaya_iss2016new}:
\begin{align}\label{issf}
   {\left|h(.,t)\right|_{\mathscr{U}}^2 \geqslant \bar{k}_1e^{\bar{k}_2t}|h_0|^2_{\mathscr{U}}
    -\bar{k}_3e^{\bar{k}_4t}\|D\|_\infty^2-\bar{k}_5e^{\bar{k}_6t},}
\end{align}
for $t\in [0,t_{max}]$, where $|.|_{\mathscr{U}}$ denotes a distance metric from the unsafe set; $\|.\|_\infty$ denotes the $L^\infty$  norm given by $\|M\|_\infty = (ess) \sup_{t\in [0,t_{max}]}\|M(x,t)\|^2_2$ with $0< t_{max}<\infty$; $\|.\|_2$ denotes the $L^2$ spatial norm given by $\sqrt{\int^L_0M^2(x,t)\diff{x}}$; $h_0 = h(.,0)$ is the initial condition for \eqref{h_system}; and $\bar{k}_i\in\mathbb{R}^+, i\in \{1, \cdots, 6\}$ are positive constants. This definition essentially states that the distance of the system states from the unsafe region is always lower bounded by the difference between two competing terms: one arising from initial distance while the other is dictated by the size of the disturbance \cite{jaya_iss2016new}. Keeping this distance to be positive will ensure that the system states always stay away from the unsafe set. {This means that the following should be ensured: $\bar{k}_2>\max\{\bar{k}_4,\bar{k}_6\}$.}


\textbf{Stability Property} is given by \textit{ISSt} which in turn is defined by the following criterion on the system state $h(x,t)$ \cite{jaya_iss2016new,mironchenko2018criteria}:
\begin{align}\label{isst}
   {\left\|h(.,t)\right\|_S^2  \leqslant \Tilde{k}_1e^{-\Tilde{k}_2t}\|h_0\|_S^2 +\Tilde{k}_3\|D\|_\infty^2,}
\end{align}
where $\left\|h\right\|_S$ is a spatial norm in $[0,L]$, defined here as  $\left\|h\right\|_S:=  \sqrt{\int_0^L h^2\diff{x} + h^2(L)+h^2(0)}$; $\|.\|_\infty$ denotes the $L^\infty$ norm defined after \eqref{issf}; $h_0 = h(.,0)$ is the initial condition for \eqref{h_system}; and $\Tilde{k}_i\in\mathbb{R}^+, i\in \{1, \cdots, 3\}$ are positive constants. The definition essentially states that system states (in the sense of certain norm) will be upper bounded by a combination of two terms: one arising from the initial states while the other arising from the disturbance. Under asymptotic conditions, the boundedness of the states will be dictated by the size of disturbance.  

\begin{remm}\normalfont
Note that the finite time horizon $t\in [0,t_{max}]$ imposed in this definition is inspired by the \textit{limited duration safety} explored in \cite{ohnishi2021constraint}. As mentioned in \cite{ohnishi2021constraint}, satisfying safety constraints over infinite time may lead to restrictions in design, which can be relaxed by formulating a \textit{limited duration safety} problem.
\end{remm}

\begin{remm}\normalfont
The parameter $\bar{k}_5$ essentially makes the safety definition \eqref{issf} a practical ISSf. That is, with $\bar{k}_5=0$, \eqref{issf} becomes ISSf condition. From the design point of view, as argued in \cite{jaya_iss2016new}, $\bar{k}_5$ can help accommodate a polynomial type barrier certificate via sum-of-squares design. On the other hand, from a robustness to model uncertainty point of view, $\bar{k}_5$ can also help accommodate model uncertainties -- in addition to the external disturbance. Nevertheless, we will design our controller such that $\bar{k}_5$ is made arbitrarily small.
\end{remm}

In the subsequent sections, our approach of finding the control gains are as follows. First, in Section 3, we find the conditions on control gains that satisfy the pISSf criterion in \eqref{issf}. Next, in Section 4, we show that the pISSf conditions on control gains additionally guarantee ISSt for the system in the sense of \eqref{isst}. 

\section{Input-to-State safety based control design}

In this section, we focus on the Input-to-State Safety condition mentioned in \eqref{issf}. First, we formulate the unsafe set $\mathscr{U}$ and the distance metric $\left|h\right|_{\mathscr{U}}$. Subsequently, we construct a control barrier functional that would eventually help us derive the conditions on control gains to satisfy \eqref{issf}.

\subsection{Unsafe set and distance metric formulation}
The goal of this work is to design a control strategy that will guarantee safety of the battery system under anomalies. The criterion for safety is that the spatial norm of the temperature deviation of the battery from a set-point remains below a prescribed threshold $\overline{h}$. Mathematically, this implies 
\begin{align}\label{main_safety}
    \left\{\int_0^Lh^2(x,t)\diff{x}\right\}^\frac{1}{2}\leqslant \overline{h}, \forall x\in [0,L], t\in [0,t_{max}].
\end{align}
Alternatively, we can define  an unsafe set
\begin{align}\label{unsafe_set}
    \mathscr{U} = \{a\in \mathbb{R}: a> \overline{h}\},
\end{align}
such that $\left\{\int_0^Lh^2(x,t)\diff{x}\right\}^\frac{1}{2}> a,  \forall a \in  \mathscr{U}, t\in [0,t_{max}]$ .

Keeping this unsafe set in mind, let us define the distance metric for our framework.

For a given function $h\in C([0,L]\times [0,t_{max}])$ with $0<t_{max}<\infty$, we define the distance of $h(x,t)$ from an unsafe  set $\mathscr{U}$ (defined in \eqref{unsafe_set}) by the following metric:
 \begin{align}\label{dist_metric}
      \hu:=\inf\limits_{a\in {\mathscr{U}}}&\left\{a^2 -\int_0^Lh^2(x,.)\diff{x}\right\}^\frac{1}{2}.
 \end{align}

\subsection{pISSf via barrier functional characterization}
It can be shown that the existence of a particular pISSf barrier functional can automatically guarantee pISSf in the sense of \eqref{issf} \cite{jaya_iss2016new}. For the sake of completeness, we present the following proposition which is similar to the one presented in \cite{jaya_iss2016new}.

\vspace{0.1in}
\begin{propp}\label{existence}
Consider the PDE system given by \eqref{h_system}-\eqref{input1}, the prescribed unsafe set $\mathscr{U}\subset \mathbb{R}^+$ given by \eqref{unsafe_set} and the distance metric as defined in \eqref{dist_metric}. Suppose there exists a safety barrier functional $\mathscr{B}: \mathbb{H}_1 \to \mathbb{R}$ satisfying the following two conditions:\\
\noindent
  {\textbf{Condition 1:}  $-c_1\hu^2 -\rho \leqslant \bh \leqslant -c_2 |h|^2_{\mathscr{U}}$, and\\
\noindent
\textbf{Condition 2:} 
        $\bdot \leqslant -c_3|h|^2_{\mathscr{U}} +c_4 \|D\|^2$,}\\
{where $ \mathbb{H}_1$ represents the Sobolev space containing square-integrable functions whose derivatives are also square-integrable.} Here $\bdot $ is the derivative of $\mathscr{B}$ along the solution trajectory $h(x,t)$ of the PDE system, and  $c_i, \forall i\in \{1,2,3,4,5\}$ and $\rho$ are positive constants. Then the PDE system \eqref{h_system}-\eqref{input1} is considered to be practically input-to-state safe (pISSf) with respect to the unsafe set $\mathscr{U}$.  
\end{propp}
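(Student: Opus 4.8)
The plan is to collapse the two functional conditions into a single scalar linear differential inequality for the barrier functional $\bh$ and then invoke the comparison (Gr\"onwall) lemma, reading off the pISSf constants $\bar{k}_1,\dots,\bar{k}_7$ at the very end. The substance of the argument is bookkeeping: because $\bh$ is a nonpositive quantity and the target \eqref{issf} is a \emph{lower} bound on the safety margin $\hu^2$ rather than the customary upper bound, one must be careful about which half of Condition 1 is used at each step and in which direction each inequality points.

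First I would eliminate the state-dependent term on the right of Condition 2. The lower bound in Condition 1, namely $-c_1\hu^2\leqslant \bh$, rearranges (divide by $-c_1<0$, which flips the inequality, then multiply by $c_3>0$) to $-c_3\hu^2\leqslant \tfrac{c_3}{c_1}\bh$. Substituting this into Condition 2 gives the linear differential inequality
\begin{align*}
  \bdot \leqslant \lambda\,\bh + g(t),\qquad \lambda:=\frac{c_3}{c_1}>0,
\end{align*}
where $g(t):=c_4\|D\|^2+c_5\|u\|^2+\kappa$ collects the nonnegative forcing. Since $\lambda>0$, integrating this inequality produces growing exponentials, which is precisely the form $e^{\bar{k}_i t}$ demanded by \eqref{issf}.

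Next I would integrate via the comparison lemma: multiplying by $e^{-\lambda t}$ and integrating on $[0,t]$ yields
\begin{align*}
  \mathscr{B}(h(t))\leqslant e^{\lambda t}\,\mathscr{B}(h_0)+\int_0^t e^{\lambda(t-s)}g(s)\,\diff s.
\end{align*}
Bounding the forcing by its supremum in time, $g(s)\leqslant c_4\|D\|_\infty^2+c_5\|u\|_\infty^2+\kappa$, and using $\int_0^t e^{\lambda(t-s)}\diff s=\tfrac{1}{\lambda}(e^{\lambda t}-1)\leqslant \tfrac{1}{\lambda}e^{\lambda t}$ converts the integral into a constant multiple of $e^{\lambda t}$ times each of $\|D\|_\infty^2$, $\|u\|_\infty^2$, and $\kappa$. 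On the initial-condition term I would apply the \emph{upper} half of Condition 1, $\mathscr{B}(h_0)\leqslant -c_2\hu^2\big|_{t=0}$, producing the contribution $-c_2 e^{\lambda t}|h_0|_{\mathscr{U}}^2$.

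Finally, to recover $\hu^2$ at time $t$ I would again invoke the lower half of Condition 1 in the form $\hu^2\geqslant -\tfrac{1}{c_1}\mathscr{B}(h(t))$: multiplying the integrated bound by $-1/c_1<0$ reverses its direction and chains into the desired lower bound on $\hu^2$. Collecting terms and using $e^{\lambda t}\geqslant 1$ to discard the exponential on the (positive) initial-data contribution delivers \eqref{issf} with $\bar{k}_1=c_2/c_1$, $\bar{k}_2=c_5/c_3$, $\bar{k}_4=c_4/c_3$, $\bar{k}_6=\kappa/c_3$, and common rate $\bar{k}_3=\bar{k}_5=\bar{k}_7=\lambda=c_3/c_1$, all positive as required. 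The main obstacle is not the computation but keeping the inequality directions consistent through the sign-reversing divisions by $-c_1$, since $\bh\leqslant 0$ while the claim asserts a lower bound; each application of Condition 1 must be paired with the correct half (the lower half to form the differential inequality and to recover $\hu^2$, the upper half on the initial data). One must also implicitly assume enough regularity/well-posedness of \eqref{h_system} for $t\mapsto\mathscr{B}(h(t))$ to be absolutely continuous, so that the comparison lemma legitimately applies.
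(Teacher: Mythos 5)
Your argument is correct and is essentially the proof the paper omits: the paper's own "proof" simply defers to the comparison-lemma argument in the cited reference \cite{jaya_iss2016new}, which is exactly what you reconstruct — forming $\bdot \leqslant \frac{c_3}{c_1}\bh + g(t)$ from the lower half of Condition 1, integrating, and converting back with the appropriate halves of Condition 1, with all inequality directions handled correctly. The only implicit assumption worth flagging (beyond the regularity of $t\mapsto\bh$ you already note) is that $|h_0|^2_{\mathscr{U}}\geqslant 0$, i.e.\ the initial state is safe, which is what licenses discarding the factor $e^{\lambda t}$ on the initial-data term in the final step.
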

\vspace{-0.1in}
\begin{proof}
This proof can be done by following the approach shown in \cite{jaya_iss2016new}. Note that the parameter $\rho$ captures the effect of $\bar{k}_5$ presented in pISSf definition \eqref{issf}. 
\end{proof}


According to the above proposition, if we can construct a safety barrier functional $\mathscr{B}$, we can guarantee the pISSf property for the PDE system given in \eqref{h_system}. Let us now construct a pISSf barrier functional $\mathscr{B}$ that satisfies the two conditions presented in Proposition \ref{existence}. 

\subsection{Construction of pISSf barrier functional and design requirements}


Now, before presenting the theorem that provides the design specifications for ensuring pISSf, we will introduce a version of Poincare's Inequality through the following Lemma.
\vspace{0.1in}
\begin{Lemm}[A Version of Poincare Inequality]\label{Poincare_mod0}
For the PDE system given by \eqref{h_system} with continuously differentiable solution on $x\in[0,L]$, the following inequaltity is true:
\begin{align}\label{Poincare_mod1}
        &-\int_0^L\!\!\!h_x^2\diff{x} \leqslant  \frac{1}{4L}\left[h^2(0)+h^2(L)\right] -\frac{1}{4L^2}\int_0^L\!\!\!h^2\diff{x}.
\end{align}

\end{Lemm}

\begin{proof}

First, let us use integration by parts on the following integral to obtain:
\begin{align}
    \int_0^L h^2\diff{x} = Lh^2(L)-2\int_0^L xhh_x\diff{x}.\label{d1}
\end{align}
Now, applying Young's inequality on the second term of right hand side of \eqref{d1} yields:
\begin{align}
     \int_0^L h^2\diff{x} \leqslant Lh^2(L) + \frac{1}{\sigma_2}\int_0^L h^2\diff{x}+{\sigma_2}\int_0^Lx^2 h_x^2\diff{x}. \label{d2}
\end{align}
where $\sigma_2$ is an arbitrary positive constant. Since $x\in [0,L]$, we can majorize with $x^2 \leqslant L^2$ and  re-arrange terms in \eqref{d2} to get:
\begin{align}
    \left(1-\frac{1}{\sigma_2}\right)\int_0^Lh^2\diff{x}\leqslant  Lh^2(L) +{\sigma_2}L^2\int_0^L h_x^2\diff{x}. \label{d3}
\end{align}
Re-arranging terms of \eqref{d3} furthermore yields the following: 
\begin{align}\label{lemma1a}
      &-\int_0^Lh_x^2\diff{x} \leqslant  \frac{1}{L}\frac{1}{\sigma_2}h^2(L) +\frac{1-\sigma_2}{\sigma_2^2L^2}\int_0^Lh^2\diff{x}.
\end{align}

Next, by integrating by parts the following integral, we get
\begin{align}\nonumber
    -\int_0^L&h^2(L-p)\diff{p}= Lh^2(0)\\&+2\int_0^L xh(L-p)h_x(L-p)\diff{p}.\label{d5}
\end{align}
Now we apply Young's inequality to the second term of the right hand side of \eqref{d5}, and further using  $p^2\leqslant L^2$, we obtain:
\begin{align}\nonumber
    -\int_0^Lh^2(L-p)\diff{p}\leqslant &Lh^2(0)+\frac{1}{\sigma_1}\int_0^L h^2(L-p)\diff{p}\\&+{\sigma_1}L^2\int_0^L h^2_x(L-p)\diff{p},
\end{align}
where $\sigma_1$ is an arbitrary positive constant. 
Now, defining $L-p:=x$, we can re-write the above inequality as
\begin{align}\nonumber
    \int_0^Lh^2(x)\diff{x}\leqslant& Lh^2(0)+\frac{1}{\sigma_1}\int_0^L h^2(x)\diff{x}\\&+{\sigma_1}L^2\int_0^L h^2_x(x)\diff{x}.\label{d7}
\end{align}
Again, re-arranging terms of \eqref{d7}, we can obtain:
\begin{align}
       &-\int_0^Lh_x^2\diff{x} \leqslant  \frac{1}{L}\frac{1}{\sigma_1}h^2(0) +\frac{1-\sigma_1}{\sigma_1^2L^2}\int_0^Lh^2\diff{x}.
  \label{lemma1b}
\end{align}
Subsequently, from \eqref{lemma1a} and \eqref{lemma1b}, we obtain:
\begin{align}\nonumber
    -\int_0^Lh_x^2\diff{x} & = -\frac{1}{2}\int_0^Lh_x^2\diff{x} -\frac{1}{2}\int_0^Lh_x^2\diff{x}\\\nonumber
    & \leqslant 
    \frac{1}{2L}\left[\frac{1}{\sigma_1}h^2(0)+\frac{1}{\sigma_2}h^2(L)\right]\\\label{lemma1c} &\hspace{3mm}+\left[\frac{1-\sigma_1}{2\sigma_1^2L^2}+\frac{1-\sigma_2}{2\sigma_2^2L^2}\right]\int_0^Lh^2\diff{x}.
\end{align}
Thus, plugging in $\sigma_1=\sigma_2=2$ in \eqref{lemma1c}, we obtain \eqref{Poincare_mod1}. 
This completes the proof of this lemma.
\end{proof}
\vspace{-2mm}

 Next, in the following Theorem, we show that the control barrier function given by \eqref{barrier_ex}  guarantee pISSf in the sense of \eqref{issf} by satisfying the two conditions presented in \mbox{Proposition 1}.

\vspace{0.1in}
\begin{thmm}[Design Requirements for pISSf]
 {Consider the system \eqref{h_system} with boundary conditions \eqref{bc11}. Let us also consider the unsafe set for this system to be \eqref{unsafe_set} and the metric measuring the distance from this unsafe set to be given by \eqref{dist_metric}. If the controller gains are chosen such that the following inequalities are satisfied,
\begin{align}\label{gains1}
   \left[(\mu-1)+\frac{1}{4Lk}\right]\leqslant0,\quad
     \left[(\beta-1)+\frac{1}{4Lk}\right]\leqslant0,
\end{align}
then the system \eqref{h_system} satisfies the two conditions of \mbox{Proposition 1}, and is considered to be practical Input-to-State Safe (pISSf) with respect to the unsafe set $\mathscr{U}$. }
\end{thmm}

\begin{proof}
First, let us this construct the pISSf barrier functional:
\begin{align} 
    \bh= & \int_0^L\!\!h^2(x,.)\diff{x}-\overline{h}^2-\kappa_1\overline{h}^2e^{-\kappa_2(t_{max}-t)},  \label{barrier_ex}
\end{align}
$\forall t\in[0, t_{max}]$. Here, $\kappa_1>0$ and $\kappa_2=1/\kappa_2$ are positive constants.
Now, we will prove that the barrier functional \eqref{barrier_ex} satisfies Condition 1 from Proposition \ref{existence}.  For the right-hand side of the inequality in Condition 1, we majorize the functional given in \eqref{barrier_ex} by neglecting the exponential term to obtain:
\begin{align}
    \bh\leqslant\left(\int_0^L\! h^2\diff{x}-\overline{h}^2\right).
\end{align}
We note here that $\overline{h}^2\geqslant \inf_{a\in \mathscr{U}}a^2$ from \eqref{unsafe_set}. Next, considering the distance metric in \eqref{dist_metric}, we can write 
\begin{align}
    \bh \leqslant -\inf\limits_{a\in \mathscr{U}}\left\{a^2-\int_0^L\!\!\!h^2\diff{x} \right\}\leqslant-c_2 \hu^2.
\end{align}
where $c_2\leqslant 1$. 

Thereafter, we prove the left-hand side of the inequality in Condition 1 from Proposition \ref{existence}. We observe that $ \overline{h}^2e^{-\kappa_2t_{max}}\leqslant \overline{h}^2e^{-\kappa_2(t_{max}-t)}\leqslant \overline{h}^2, \forall t\in [0, t_{max}]$. 
Here, using the upper bound on the exponential, we can similarly write
\begin{align}\label{lastb}
    \bh \geqslant \int_0^L h^2\diff{x}-\overline{h}^2-\kappa_1\overline{h}^2.
\end{align}
Then, we use \eqref{unsafe_set} to conclude that the distance from the unsafe set boundary is always greater than the distance from the inside of the set. Mathematically, this implies, $\overline{h}^2-\int_0^L h^2\diff{x}\leqslant \inf\limits_{a\in \mathscr{U}}\left\{a^2-\int_0^Lh^2\diff{x} \right\}=\hu$. Thus, \eqref{lastb} yields
\begin{align}
    \bh \geqslant  -c_1 \hu^2 -\rho.
\end{align}
where $c_1\geqslant 1$, and $\rho=\overline{h}^2\kappa_1$.  {Note that $\rho$ can be made arbitrarily small by choosing an arbitrarily small $\kappa_1$.} 
This confirms that $\bh$ satisfies Condition 1 from Proposition \ref{existence}. 


Next, we will prove that  the barrier functional \eqref{barrier_ex} satisfies Condition 2 from Proposition \ref{existence}. In order to do so, we take the derivative of \eqref{barrier_ex} along $h(x,t)$ to obtain:
\begin{align}
    \bdot =&2\int_0^L hh_t\diff{x}-\kappa_1\kappa_2\overline{h}^2e^{-\kappa_2(t_{max}-t)}.
\end{align}
Here, we note that $ \overline{h}^2e^{-\kappa_2t_{max}}\leqslant \overline{h}^2e^{-\kappa_2(t_{max}-t)}, \forall t\in [0, t_{max}]$ and $\kappa_1\kappa_2=1$. Then, using this lower bound on the exponential once again, we can obtain the following inequality
\begin{align}
    \bdot \leqslant 2\int_0^L hh_t\diff{x}-\kappa_1\kappa_2\mathcal{E}\overline{h}^2. \label{d111}
\end{align}
where $\mathcal{E}=e^{-\kappa_2t_{max}}$. Subsequently, we replace $h_t$ in \eqref{d111} using \eqref{h_system}, and further using integration by parts we have
\begin{align}\label{bdot}
    \bdot \leqslant 2 BT_1 -2\alpha \int_0^L\!\!\!\!\! h_x^2  +2\int_0^L h D -\mathcal{E}\overline{h}^2,
\end{align}
where 
\begin{align}\label{ssare11}
    BT_1 = &\alpha\left[ h_x(L)h(L)-h_x(0)h(0)\right]
\end{align}
By replacing $h_x(0)$ and $h_x(L)$ from \eqref{bc11}, in \eqref{ssare11}, we get
\begin{align}\label{BT1}
    BT_1 = &k\alpha \left[(\beta-1) h^2(L) +(\mu-1) h^2(0)\right]
\end{align}
Next, we apply Young's inquality on the third integral of the right hand side of \eqref{bdot} to obtain,
\begin{align}\label{bdot111}
    \bdot \leqslant& 2BT_1 -2\alpha \int_0^L\!\!\!\!\! h_x^2 +\nu_1\int_0^L\!\!\!\!\! h^2  + \frac{1}{\nu_1} \int_0^L\!\!\!\!\! D^2-\mathcal{E}\overline{h}^2,
\end{align}
where we choose $\nu_1=\frac{\alpha}{2L^2}>0.$ Next, we multiply \eqref{Poincare_mod1} with $2\alpha$ in Lemma 1 to obtain: 
 \begin{align}\label{dxxx}
 -{2\alpha}\int_0^L\!\!\!\!\!h_x^2\diff{x} \leqslant  \frac{\alpha}{2L}\left[h^2(0)+h^2(L)\right] -\frac{\alpha}{2L^2} \int_0^L\!\!\!\!\!h^2\diff{x}. 
 \end{align}
 Thus, applying \eqref{dxxx} in \eqref{bdot111}, cancelling $\int_0^L h^2$ term from \eqref{bdot111} and substituting the value of $\nu_1$ yields:
\begin{align}\label{bdot11}
    \bdot \leqslant & BT_2+c_4 \int_0^L\!\!\!\!\! D^2-\mathcal{E}\overline{h}^2,
\end{align}
where $c_4=\frac{2L^2}{\alpha} $ and  $BT_2$ is written as
\begin{align}\nonumber
BT_2=&2\alpha\Bigg[\left(k(\beta-1)+\frac{1}{4L}\right)h^2(L)\\
&+\left(k(\mu-1)+\frac{1}{4L}\right)h^2(0)\Bigg]
\end{align}
%
%
Now, if we choose the gains $\mu$ and $\beta$ using the constraints in \eqref{gains1}, then we can majorize $\bdot$ by neglecting it. Furthermore, we can add a term $\mathcal{E}\int_0^Lh^2$ to $\bdot$ to further majorize it,  and obtain:
\begin{align}\label{bdotxx}
    \bdot \leqslant \mathcal{E}\int_0^L\!\!\!\!\! h^2 -\mathcal{E}\overline{h}^2+ c_4 \int_0^L\!\!\!\!\! D^2.
\end{align}
Thus, using the definition of distance metric and choosing $c_3=\mathcal{E}>0$ yields
 \begin{align}
    \bdot \leqslant -c_3 \hu^2+c_4 \left\| D\right\|^2.
\end{align}
 This completes our proof.
\end{proof}

In this section, we have derived the conditions on control gains for which the system is pISSf. In the following section, we will show that the derived conditions for pISSf ensures ISSt for the system.
\section{Input-to-State stabilizing safe control}

In the present section, we will show that the control gain conditions in Theorem 1 simultaneously satisfy the pISSf criterion in the sense of \eqref{issf} and  ISSt criterion in the sense of \eqref{isst}. 
Following the results in existing literature \cite{mironchenko2017characterizations}, we can say that if there exists a functional $V(h)$ for an infinite dimensional system, which satisfies the following two conditions:
 \begin{align}\label{vcond1}
     &d_1\|h\|_S^2\leqslant {V}(h) \leqslant d_2 \|h\|_S^2,\\\label{vcond2}
    &\dot{{V}}(h) \leqslant -d_3V(h) +d_4 \|D\|^2
 \end{align}
 where $d_i, \forall i\in \{1,2,3,4,5\}$ are positive constants, then system is ISSt. In our formulation, the norm is defined as $\left\|h(.,t)\right\|_S:= \sqrt{\int_0^L h^2(x,t)\diff{x}}$.
 


Now, we will present the theorem which prescribes the design requirements on the controller gains in order to guarantee both pISSf and ISSt for the PDE system \eqref{h_system}-\eqref{input1}.

\vspace{0.1in}
\begin{thmm}[Design Requirements for both pISSf and ISSt]
 {Consider the system \eqref{h_system} with boundary conditions \eqref{bc11}. If there exists controller gains that satisfy pISSf inequality conditions given in \eqref{gains1}, 
then the system \eqref{h_system} is considered to be both pISSf and ISSt.}
\end{thmm}

\begin{proof}
Considering the following Lyapunov functional:
\begin{align}\label{Lyapunov}
    V(h) =\frac{1}{2}\int_0^L h^2\diff{x}. 
\end{align}
The first condition from \eqref{vcond1} can be proved by choosing $0<d_1 <\frac{1}{2},$ and $d_2 >\frac{1}{2}$.

Next, our goal is to prove the second condition given in \eqref{vcond2}. We take the time derivative of $V(h)$ along the direction of the solution of \eqref{h_system} to get
\begin{align}\label{Lyapunovxde}
   \dot{V} =\int_0^L h h_t\diff{x}=\int_0^L (\alpha hh_{xx}+hD)\diff{x}. 
\end{align}
Subsequently, we apply integration by parts to the first  term and Young's inequality to the second term of the right hand side of \eqref{Lyapunovxde}, respectively. Moreover, we note here that $-\alpha\int_0^L\!\!\!h_x^2\diff{x}<-\frac{\alpha}{4}\int_0^L\!\!\!h_x^2\diff{x}$ and re-write \eqref{Lyapunovxde} as 
\begin{align}\label{Vdot}
    \dot{V}=& BT_1 -\frac{\alpha}{4}\int_0^L\!\!\!h_x^2\diff{x}+ \frac{\alpha}{8L^2}\int_0^L\!\!\!h^2 \diff{x}+\frac{2L^2}{\alpha}\|D\|^2,
\end{align}
where $BT_1$ is given in \eqref{ssare11} which simplifies to \eqref{BT1} using \eqref{bc11}. 
Next, we multiply \eqref{Poincare_mod1} by $\frac{\alpha}{2}$ to obtain:
\begin{align}\label{PIarr}
 -\frac{\alpha}{4}\int_0^L\!\!\!\!\!h_x^2\diff{x} \leqslant  \frac{\alpha}{16L}\left[h^2(0)+h^2(L)\right] -\frac{\alpha}{16L^2} \int_0^L\!\!\!\!\!h^2\diff{x}.
\end{align}
Furthermore, we applied  to the first integral in \eqref{Vdot} to obtain:
\begin{align}\label{Vdott}
    \dot{V}=&BT_2-\frac{\alpha}{8L^2}\int_0^L\!\!\!h^2 \diff{x}+\frac{2L^2}{\alpha}\|D\|^2,
\end{align}
where 
\begin{align}\label{BT3}\nonumber
BT_2=&2\alpha\Bigg[\left(k(\beta-1)+\frac{1}{8L}\right)h^2(L)\\
&+\left(k(\mu-1)+\frac{1}{8L}\right)h^2(0)\Bigg].
\end{align}
%
Now, it is evident that $BT_2<0$,  if 
\begin{align}\label{pISSt}
    \left[(\mu-1)+\frac{1}{8Lk}\right]<0, \quad 
    \left[(\beta-1)+\frac{1}{8Lk}\right]<0.
\end{align}
We also note here that if the gains  satisfy the pISSf conditions given in \eqref{gains1}, then \eqref{pISSt} will be automatically satisfied and $BT_2<0$. 
Thus, we can majorize \eqref{Vdott} and using \eqref{Lyapunov} yields:
\begin{align}\label{Vdott1}
    \dot{V}\leqslant -{d}_3V+{d}_4 \|D\|^2,
\end{align}
where ${d}_3=\frac{\alpha}{4L^2}$ and ${d}_4 = \frac{2L^2}{\alpha}$. Thus, we proved that the pISSf condtions derived in \eqref{gains1}, also satisfy the less restrictive ISSt condition \eqref{pISSt}. This implies that if the gains satisfy \eqref{gains1}, the system will be considered both pISSf and ISSt. 

\end{proof}

\section{Simulation case studies}

In this section, we present some case studies to illustrate the effectiveness of the proposed framework. 
The battery model parameters are adopted from \cite{karimi2013thermal,kretzschmar2019tables}. The simulation is implemented in MATLAB. In the plant model simulation, we have injected zero mean Gaussian type measurement noise ($\mathcal{N}(0,0.1)$) in the temperature outputs as well as zero mean Gaussian type process noise ($\mathcal{N}(0,0.01)$) in the system dynamics to capture realistic scenarios. The battery module is operated under a current profile derived from Urban Dynamometer Driving Schedule (UDDS).
The desired set-point temperature here is $T_d=298K$.

To illustrate the advantages of the proposed approach, we compare the following approaches:
\begin{itemize}
    \item \textbf{Stability-Only Control (St-C):} In this case, we have used ISSt criterion \eqref{pISSt} to design the closed-loop control gains. Note that, if the design is done solely based on ISSt criterion, then there is no guarantee that it will also satisfy pISSf. To illustrate this point, and to highlight the potential advantage of combined pISSf-ISSt design, here we choose the St-C such that the gains do not satisfy pISSf conditions for the chosen unsafe set.
    \item \textbf{Stability-and-Safety Control (StSf-C) - Proposed Approach:} We have used pISSf and ISSt criteria given in \eqref{gains1} and \eqref{pISSt}, respectively, to design the closed-loop control gains. 
\end{itemize}

\begin{figure}[ht]
    \centering
    \includegraphics[width = 0.49\textwidth]{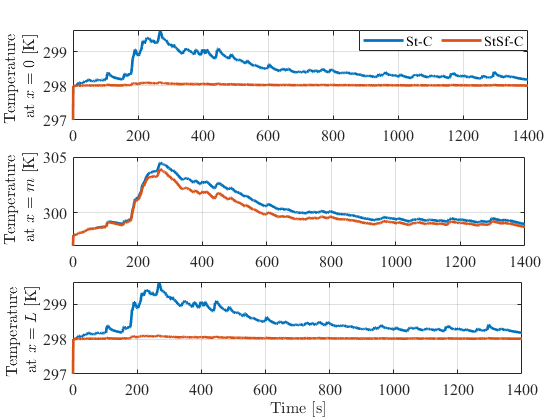}%
    \caption{Temperature measurement at the two boundaries and the mid-section from the battery module under no-anomaly scenarios with  Stability-Only Control (St-C), and Stability-and-Safety Control (StSf-C).}
    \label{fig:battery_temp_NF}
\end{figure}
\begin{figure}[ht]
    \centering
    \includegraphics[width = 0.49\textwidth]{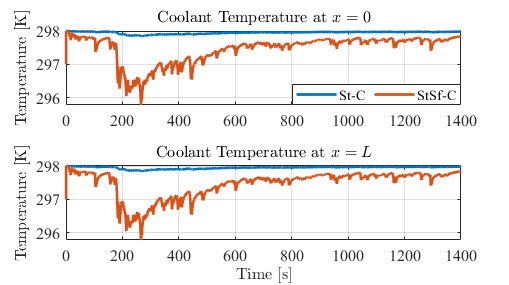}%
    \caption{Coolant temperatures at the two boundaries of the battery module under no-anomaly scenarios with Stability-Only Control (St-C), and Stability-and-Safety Control (StSf-C).}
            \label{fig:coolant_NF}
\end{figure}
The temperature response and control variables under no-anomaly scenario are shown in Figs. \ref{fig:battery_temp_NF} - \ref{fig:coolant_NF}.
The temperature of the battery module from the two boundaries and mid-section, as shown in Fig. \ref{fig:coolant_NF}, confirms the safety and stability of both strategies. In Fig. \ref{fig:coolant_NF}, the coolant temperature control at the boundaries show that the transient control action for StSf-C is greater than St-C. However, in steady state both control actions are somewhat comparable.

Next, we present a test case to illustrate the performance of the proposed approach under disturbance. We consider a scenario where an adversary injects a cyberattack in the form of a disturbance to the battery module to induce overdischarge. The disturbance is injected at $700s$ as \textit{current drain} from the module which forces the State-of-Charge (SOC) of the battery to reach zero. The SOC evolution under nominal scenario and under disturbance are shown in Fig. \ref{fig:SOC_attack}. It can be seen that the disturbance was initiated around $700s$, and consequently, after $1098s$ the modules goes into the overdischarge mode by crossing zero SOC. Furthermore, the overdischarge proceeds to induce a battery failure through increased heating of the cell \cite{overdischarge}. The increased heat generation due to additional current drain and subsequent heating due to overdischarge is shown in Fig. \ref{fig:heat_attack}.

\begin{figure}[ht!]
    \centering
    \includegraphics[width = 0.49\textwidth]{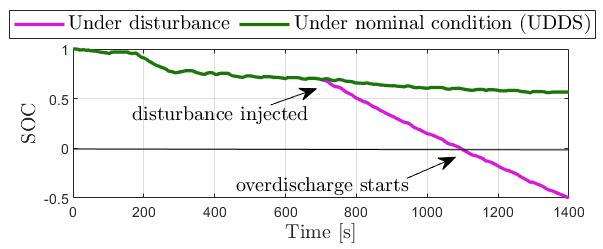}%
    \caption{State-of-Charge (SOC) of the battery module under nominal condition and under disturbance through battery overdischarge.}
            \label{fig:SOC_attack}
\end{figure}

\begin{figure}[ht!]
    \centering
    \includegraphics[width = 0.49\textwidth]{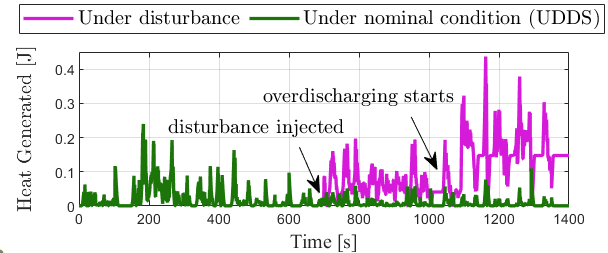}%
    \caption{Heat generated in the battery module under nominal condition and under disturbance through battery overdischarge.}
            \label{fig:heat_attack}
\end{figure}

Next, we present the temperature response and control variables under the two control strategies, as shown in \mbox{Figs. \ref{fig:PDE_cyb}-\ref{fig:coolant_cyb}}. The spatiotemporal temperature distribution of the battery module in \mbox{Fig. \ref{fig:PDE_cyb}} shows an increased in temperature at $700s$ for all two strategies after the disturbance injection by the adversary and a further rise in temperature after the overdischarge is initiated in the module. However, only with StSf-C scheme the temperature of the module remains under the unsafe zone while temperatures reach unsafe values for Sf-C. This is again corroborated by the temperature plots in \mbox{Fig. \ref{fig:battery_temp_cyb}}. The unsafe zone is again shown as the grey area (above $323K$) in the middle temperature plot  in \mbox{Fig. \ref{fig:battery_temp_cyb}}. Even though the measured temperatures at $x=0$ and $x=L$, remain in the safe zone for both strategies, the temperature at the midpoint of the battery clearly shows that  the St-C strategy have violated the safety condition while SfSt-C were able to maintain the battery module temperature under allowable maximum of $323K$. This is evident from the middle temperature plot in \mbox{Fig. \ref{fig:battery_temp_cyb}} where the temperature of the battery under St-C (shown in blue) crosses the unsafe zone.  Thus, this result shows the potential benefits of StSf-C.

\begin{figure}
    \centering
   {\includegraphics[width = 0.49\textwidth]{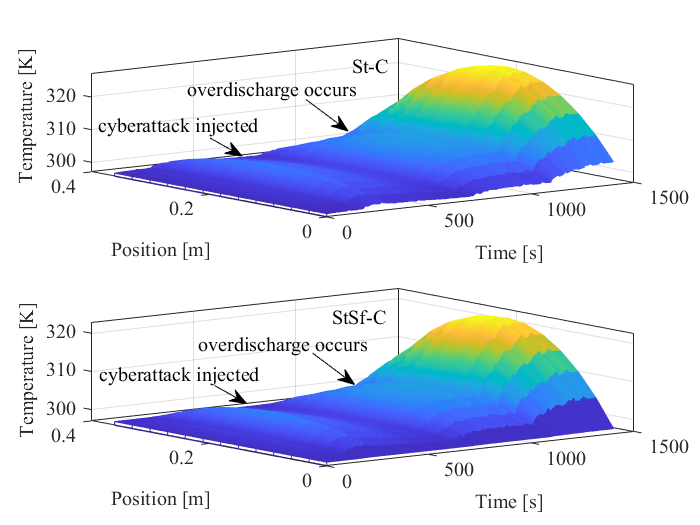}}%
\caption{Spatiotemporal temperature distribution in the battery module under overdischarge with Stability-Only Control (St-C), and Stability-and-Safety Control (StSf-C). }
    \label{fig:PDE_cyb}
\end{figure}
\begin{figure}
    \centering
    \includegraphics[width = 0.49\textwidth]{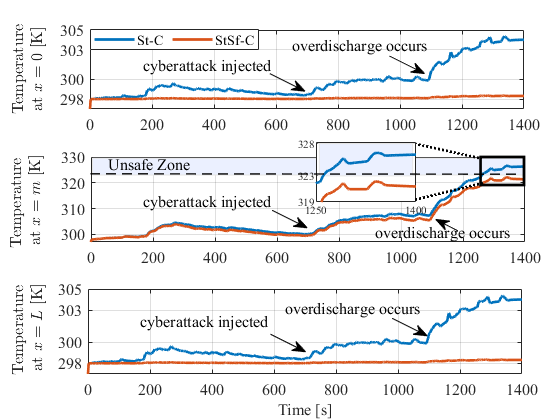}%
    \caption{Temperature measurement at the two boundaries and the mid-section from the battery module under overdischarge with Stability-Only Control (St-C), and Stability-and-Safety Control (StSf-C).}
    \label{fig:battery_temp_cyb}
\end{figure}


\begin{figure}[ht!]
    \centering
    \includegraphics[width = 0.49\textwidth]{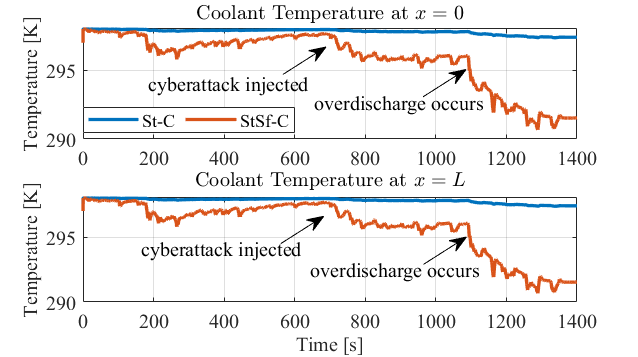}%
    \caption{Coolant temperatures at the two boundaries of the battery module under overdischarge in the module with  Stability-Only Control (St-C), and Stability-and-Safety Control (StSf-C).}
            \label{fig:coolant_cyb}
\end{figure}

\section{Conclusion}
In this paper, we have explored safe control of a class of linear Parabolic PDEs under disturbances. First, we defined unsafe sets and distance of the system states from such unsafe sets. Next, we constructed both control barrier and Lyapunov functional in order to develop a design framework for the controller under specific safety and stability guarantees. Additionally, we have applied our proposed strategy in the context of battery management system using boundary coolant control. We present the efficacy of our proposed methodologies through simulation studies under nominal conditions and disturbed conditions. The simulation study shows that the proposed approach can be beneficial to maintain safety limits.  {As a future work, we plan to extend the framework to (i) $n$-dimensional PDEs and apply it towards thermal management of large-scale battery packs, and (ii) PDEs with saturation on input magnitude and rates.}








\bibliography{ref1}

\end{document}